\documentclass[11pt,reqno]{amsart}
\usepackage[T1]{fontenc}
\usepackage[utf8]{inputenc}
\usepackage{lmodern}
\usepackage{amsmath,amssymb,amsthm,mathtools}
\usepackage[margin=1.1in,includefoot,headheight=0pt,headsep=0pt,footskip=18pt]{geometry}
\usepackage{booktabs}
\usepackage{placeins}
\usepackage[colorlinks=true,linkcolor=blue,citecolor=blue,urlcolor=blue]{hyperref}
\makeatletter
\def\ps@plain{\ps@empty
  \def\@oddfoot{\normalfont\small\hfil\thepage\hfil}
  \let\@evenfoot\@oddfoot}
\let\ps@firstpage\ps@plain
\makeatother
\AtBeginDocument{\pagestyle{plain}}
\newtheorem{theorem}{Theorem}[section]
\newtheorem{lemma}[theorem]{Lemma}
\newtheorem{proposition}[theorem]{Proposition}
\newtheorem{corollary}[theorem]{Corollary}
\theoremstyle{remark}

\newcommand{\F}{\mathbb F}
\newcommand{\bch}{\operatorname{BCH}}

\DeclareMathOperator{\rank}{rank}
\DeclareMathOperator{\spanof}{span}
\title[Generalized packing and covering]{Auxiliary Codes and the Generalized Packing--Covering Conjecture}
\author{Isaac Barouch Essayag}
\author{Aryeh Lev Zabokritskiy (Yohananov)}
\date{}
\subjclass[2020]{94B05, 94B65, 94B15}
\keywords{generalized covering radius, generalized Hamming weight, linear code, packing radius, BCH code}
\hypersetup{pdftitle={Auxiliary Codes and the Generalized Packing-Covering Conjecture},pdfauthor={Isaac Barouch Essayag; Aryeh Lev Zabokritskiy (Yohananov)},pdflang={en}}
\begin{document}
\begin{abstract}
The generalized packing--covering conjecture asks whether, at every order, the packing radius of a linear code is at most its covering radius. We prove the conjecture for every linear code of redundancy at most fourteen over every finite field, extending the previously established range of redundancies at most seven. We also prove the bound on generalized Hamming weights $d_t(C)\le2R_t(C)+1$ whenever the alphabet size $q$ satisfies $q\ge R_t(C)$. Both results use an auxiliary-code criterion that converts a covering property in the syndrome space into a weight bound. For binary primitive BCH codes, the packing radius is strictly smaller than the covering radius for every fixed error parameter and order, both at least two, once the extension degree is sufficiently large; this follows from existing covering bounds.
\end{abstract}
\maketitle

\section{Introduction}

The classical packing radius of a code does not exceed its covering radius. Elimelech, Firer, and Schwartz introduced generalized covering radii and conjectured an analogous comparison at every order~\cite{EFS2021}. The conjecture relates two hierarchies: generalized Hamming weights, which measure the smallest support of a subcode of prescribed dimension, and generalized covering radii, which measure the support needed to cover several ambient words simultaneously. Subsequent work bounded these radii for Reed--Muller codes~\cite{EWS2022} and determined the optimal asymptotic rate of second-order covering codes without assuming linearity~\cite{ES2024}. Li, Shangguan, and Wei determined that rate at every fixed order, both with and without the linearity requirement~\cite{LSW2026}.

Let $C$ be an $[n,k]_q$ linear code, and let $1\le t\le\min\{k,n-k\}$. Its $t$th generalized Hamming weight $d_t(C)$ is the least number of coordinates supporting a $t$-dimensional subcode. Its $t$th generalized covering radius $R_t(C)$ is the least integer $r$ such that every $t$ ambient words can be moved into $C$ by changing entries in a common set of at most $r$ coordinates. With
\[
\delta_t(C)=\left\lfloor\frac{d_t(C)-1}{2}\right\rfloor,
\]
the conjecture is
\begin{equation}\label{eq:conjecture}
\delta_t(C)\le R_t(C),
\qquad\text{or equivalently}\qquad
d_t(C)\le 2R_t(C)+2.
\end{equation}
The first-order case is the classical comparison. The higher-order formulation is not an immediate consequence of packing disjoint Hamming balls; the rank restriction inherent in generalized weights changes that argument~\cite[Section VI]{EFS2021}.

The comparison also transfers information between the two hierarchies. As emphasized by Yu and Schwartz~\cite[Introduction]{YS2026}, generalized Hamming weights have a much more developed theory than generalized covering radii. Whenever~\eqref{eq:conjecture} holds, lower bounds on generalized Hamming weights yield lower bounds on the corresponding covering radii. Proving the conjecture for a class of codes therefore makes existing weight results available for its covering problem, in addition to extending the classical geometric relation between packing and covering.

Yu and Schwartz~\cite{YS2026} proved the conjecture at order two, for codes of rate at most $3/5$, and asymptotically for every fixed rate ceiling below one. They also proved it for all orders when the redundancy $n-k$ is at most seven.

\noindent\textbf{Our contributions.}
\begin{enumerate}
\item We prove the conjecture for every linear code of redundancy at most fourteen over every finite field, extending the previous uniform range of seven.
\item We prove $d_t(C)\le2R_t(C)+1$ whenever $q\ge R_t(C)$, as a consequence of a general auxiliary-code criterion.
\item For binary primitive BCH codes with fixed error parameter $e\ge2$ and order $t\ge2$, we obtain a packing radius strictly smaller than the covering radius for all sufficiently large extension degrees, using existing covering bounds for adjacent BCH families.
\end{enumerate}

The first contribution has the following uniform form.

\begin{theorem}\label{thm:redundancy}
Let $C$ be an $[n,k]_q$ linear code with $n-k\le14$. Then
\[
d_t(C)\le2R_t(C)+2
\qquad\text{for every }1\le t\le\min\{k,n-k\}.
\]
\end{theorem}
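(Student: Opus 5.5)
We work in the syndrome space. Fix a parity-check matrix $H\in\F_q^{r\times n}$ of $C$, where $r=n-k\le 14$, and write $h_1,\dots,h_n$ for its columns. Two standard reformulations drive the argument. First, $R_t(C)$ is the least $\rho$ such that every $t$-tuple of syndromes $(s_1,\dots,s_t)\in(\F_q^r)^t$ lies in $\spanof\{h_j: j\in S\}^t$ for some set $S$ with $|S|\le\rho$. Second, $d_t(C)$ is the least $|S|$ such that the columns indexed by $S$ satisfy $|S|-\rank(H_S)\ge t$, that is, such that $H_S$ has a kernel of dimension at least $t$.

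The auxiliary-code criterion, to be proved as a lemma, should read as follows. Suppose $|S|\le R_t(C)$ is a set whose column span $V=\spanof\{h_j:j\in S\}$ contains every syndrome of some auxiliary configuration. Then adjoining a few further columns to $S$ creates $t$ independent dependencies, and this gives $d_t(C)\le 2R_t(C)+c$.

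The concrete plan has four steps.
\begin{enumerate}
\item Let $\rho=R_t(C)$. Choose $t$ syndrome tuples adapted to $H$: for instance, take $s_i$ equal to sums of disjoint groups of columns, so that every $s_i$ lies in the span of at most $\rho$ columns indexed by a set $S$.
\item Observe that $S$ together with the groups used to form the $s_i$ yields relations $\sum_{j\in G_i}h_j\in V$. Each such relation is a codeword supported on $S\cup G_i$.
\item Check that these $t$ codewords are linearly independent, since their supports are distinct on the disjoint $G_i$.
\item Bound the size of the total support $|S\cup\bigcup_i G_i|$.
\end{enumerate}

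The difficulty is that $\bigcup_i G_i$ may be large. The saving comes from dimension counting in $\F_q^r$. Since $\rank H\le r$, any $r+1$ columns are dependent, so $d_t(C)\le r+t$. This gives the claim whenever $r+t\le 2\rho+2$. Moreover $\rho\ge t$ trivially (for $t\le r$, take $s_i$ to be independent syndromes) and $\rho\ge R_1(C)$. Hence the claim holds automatically when $r\le t+2$, or more generally when $r\le 2\rho+2-t$.

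This splits the argument into cases on $(r,t,\rho)$ with $r\le14$. In the remaining cases $\rho$ is small relative to $r$, namely $2\rho-t+2<r\le14$. There we use the covering property: the $t$-tuples must be covered by $t$-dimensional-at-most spans of $\rho$ columns. We then apply the Singleton-type bound $d_t(C)\le r+t$ together with a refined bound obtained by covering $t$ well-chosen syndromes. Taking $s_1,\dots,s_t$ inside a subspace spanned by a minimal support of a $t$-dimensional subcode bounds the weight by $\rho$ plus the number of columns needed to generate $s_i$ outside $V$, which is at most $\rho+2$.

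The main obstacle is exactly this refined step: producing a choice of syndromes for which the extra columns needed number at most $\rho+2$, uniformly over all fields. My first attempt would be induction on $t$. Pass to the shortened/punctured code obtained by deleting the coordinates in $S$, use $R_{t-1}$ of the quotient, and compare via $d_t\le d_{t-1}+\text{(small)}$. Any residual small cases (e.g. $r$ near $14$ with $t$ about $r/2$) would be settled by showing that the fractional counting inequality $q^{rt}\le\binom{n}{\rho}q^{\rho t}$ forces $\rho$ large enough that $r+t\le 2\rho+2$. I expect $r=14$ to be precisely where this counting plus the criterion ceases to close, which explains the threshold.
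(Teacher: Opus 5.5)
Your proposal has a genuine gap. I keep your notation, $r=n-k$ and $\rho=R_t(C)$.

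\textbf{The constructive step.} The step you flag as the main obstacle is choosing syndromes whose cover forces a $t$-dimensional subcode on at most $2\rho+2$ coordinates. That is essentially the conjecture itself, and nothing in your plan supplies it. The disjoint-groups construction gives support $|S|+\sum_i|G_i|$, which is not bounded in terms of $\rho$. The induction on $t$ would need an increment estimate relating $d_t-d_{t-1}$ to $R_t-R_{t-1}$, and none is available. Choosing syndromes inside a minimal support of a $t$-dimensional subcode is circular, since that support has size $d_t$.

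The paper's substitute is an auxiliary-code criterion in a precise form. Put all $t$ syndromes in the span $V_A$ of one set $A$ of $L$ independent columns, with $\rho<L\le r$. Take them as the image $U$ of an $[L,t,d]_q$ code with $d\ge\rho-t+2$. Suppose an independent set $T$ with $|T|\le\rho$ satisfies $U\subseteq V_T$. Then the distance condition and Singleton force $U\cap V_{A\cap T}=0$. Hence $\dim(V_A\cap V_T)\ge t+|A\cap T|$, so $A\cup T$ has nullity at least $t$ and $d_t(C)\le\rho+L$. With $L=\rho+1$ or $\rho+2$ this gives the bound. However, such auxiliary codes need not exist over small fields. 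Extended Reed--Solomon codes supply them when $q\ge\rho$, and the paper builds them by hand for the four pairs $(t,\rho)\in\{(3,4),(3,5),(4,5),(4,6)\}$. So the criterion alone cannot prove the theorem.

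\textbf{The counting fallback.} Your fallback for the residual cases fails as stated. For fixed redundancy the length $n$ is unbounded, so $q^{rt}\le\binom{n}{\rho}q^{\rho t}$ gives no lower bound on $\rho$ that is uniform in $n$.

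The missing ingredient is a bound on the dimension of a hypothetical counterexample. Suppose $d_t(C)\ge2\rho+3$, and put $a=r-2\rho+t-2$. Every $a$-dimensional subcode of $C^\perp$ then has support at least $k+r-2\rho-1$. Otherwise $2\rho+2$ columns of $H$ would lie in a kernel of dimension $2\rho-t+2$, giving nullity at least $t$. Averaging over $a$-subcodes of a shortened dual (the Tsfasman--Vl\u{a}du\c{t} estimate) then caps $k$ by an explicit $K(q,r,t,\rho)$. Only with $n\le r+K$ does sphere covering become a finite check.

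\textbf{The established cases.} The paper also relies on the established cases of Yu and Schwartz: $t=2$, $R_t=t$, $t\ge r-4$, $3R_t\ge2r$, and $k\le5t-2$ for $t\ge3$. Your plan uses none of these, and they cannot simply be dropped. For example, take $q=2$, $r=14$, $t=2$, $\rho=5$.
\begin{itemize}
\item The cap gives $k\le57$.
\item At $n=71$ the ball volume already exceeds $2^{28}$, so counting yields no contradiction.
\item The needed $[6,2,5]_2$ or $[7,2,5]_2$ auxiliary code violates the Griesmer bound.
\end{itemize}
Among the tools discussed here, only the order-two theorem excludes this case.
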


The theorem imposes no restriction on the length or on the finite field. Its proof combines the established cases of Yu and Schwartz with an auxiliary-code construction and classical generalized-weight estimates. The auxiliary construction applies beyond bounded redundancy: a code of suitable length, dimension, and minimum distance supplies a syndrome subspace whose short cover forces a subcode of $C$ with small support.

One consequence is a large-alphabet bound with a smaller additive constant.

\begin{theorem}\label{thm:alphabet}
Let $C$ be an $[n,k]_q$ linear code, let $1\le t\le\min\{k,n-k\}$, and put $r=R_t(C)$. If $q\ge r$, then
\[
d_t(C)\le2r+1.
\]
\end{theorem}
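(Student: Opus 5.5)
The plan is to reduce the weight bound to a statement about columns of a parity-check matrix, and then to use an MDS code as the auxiliary code. Throughout, the hypothesis $q\ge r$ is used only to guarantee that this MDS code exists.

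\textbf{Reformulation in the syndrome space.} Let $H$ be an $(n-k)\times n$ parity-check matrix of $C$, and write $H_S$ for the submatrix on the columns indexed by $S$. Two standard facts give the translation.
\begin{itemize}
\item[(a)] The subcode of $C$ supported inside $S$ has dimension $|S|-\rank H_S$. Hence $d_t(C)\le 2r+1$ as soon as we find $S$ with $|S|\le 2r+1$ and $|S|-\rank H_S\ge t$.
\item[(b)] $R_t(C)=r$ means that every $t$ syndromes, and therefore every subspace $V\le\F_q^{\,n-k}$ of dimension at most $t$, lies in $\spanof H_A$ for some set $A$ of at most $r$ columns.
\end{itemize}
So the goal is to choose a syndrome subspace $V$ whose short cover $A$ forces many linear dependencies among few columns.

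\textbf{Degenerate case and choice of $V$.} If $\rank H\le r$, then $d_t(C)\le \rank H+t\le r+t\le 2r+1$, since $t\le R_t(C)$. Otherwise, I would pick a set $B$ of $r+1$ linearly independent columns. Let $M$ be an $[r+1,t,r+2-t]_q$ MDS code; such a code exists (for instance a Reed--Solomon or doubly extended Reed--Solomon code) because $r+1\le q+1$. Define
\[
V=\{H_B c : c\in M\}.
\]
Then $\dim V=t$, and every nonzero $v\in V$, written in the basis $H_B$, has at least $r+2-t$ nonzero coordinates. By (b), $V\subseteq\spanof H_A$ with $|A|\le r$. Take $S=A\cup B$, so $|S|\le 2r+1$.

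\textbf{Counting.} Put $a=|A\cap B|$. Then
\[
\rank H_S\le (r+1)+(r-a)-\dim\bigl(\spanof H_B\cap \spanof H_{A\setminus B}\bigr).
\]
The intersection contains the image of $V$ modulo $\spanof H_{A\cap B}$. That image has dimension $t-j$, where $j=\dim\bigl(V\cap\spanof H_{A\cap B}\bigr)$. The vectors counted by $j$ form a $j$-dimensional subcode of $M$ supported on $a$ coordinates. The generalized Singleton bound for MDS codes then gives $j\le\max\{0,\,a+t-r-1\}$. This yields
\[
|S|-\rank H_S\ge t-j.
\]
When $a\le r+1-t$ this already gives the required $t$.

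\textbf{Main obstacle.} The difficulty is the case of large overlap $a>r+1-t$, where the count loses exactly $j$ dependencies. My first attempt would be to exploit the freedom in choosing $V$. For a fixed cover $A$, the $V$'s that meet $\spanof H_{A\cap B}$ nontrivially form a proper algebraic condition. There are at most $\binom{r+1}{a}$ possible overlap sets, and the MDS structure can be reparametrized by column scalings, of which there are roughly $q$ choices per coordinate. Using $q\ge r$, I would try to choose $V$, or a different $B$ of size $r+1+j$, so that the lost dependencies are compensated. Concretely, I would enlarge $B$ by the $j$ columns needed while correspondingly shrinking the overlap, and check that the size bound $|S|\le 2r+1$ survives. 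If this fails, the fallback is an induction on $t$: apply the bound to the shortened code on the complement of the dependent part, where $R_{t-j}$ does not exceed $r$, and merge the two subcodes.
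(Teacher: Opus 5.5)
Your setup coincides with the paper's: it is Theorem~\ref{thm:auxiliary} applied with an $[r+1,t,r-t+2]_q$ MDS auxiliary code, and your estimate $|S|-\rank H_S\ge t-j$ is correct. The gap is that you stop at the case $j>0$ (large overlap $a>r+1-t$) and call it the main obstacle. You offer only speculative remedies for it, yet that case never occurs. The missing step is a dimension count inside the cover. The columns of $B$ are independent, so $\dim\spanof H_{A\cap B}=a$, and $\dim V=t$. Both spaces lie in $\spanof H_A$, which has dimension at most $|A|\le r$. Hence
\[
t+a-j=\dim\bigl(V+\spanof H_{A\cap B}\bigr)\le r .
\]
If $j\ge1$, your own generalized Singleton estimate gives $j\le a+t-r-1$, that is, $t+a-j\ge r+1$, which contradicts the display. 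So $j=0$ always. In fact $a\le r-t$, so the large-overlap case is vacuous. Your inequality then already gives nullity at least $t$ on at most $2r+1$ columns, hence $d_t(C)\le 2r+1$ by your fact (a). This is precisely the step $b=0$ in the paper's proof of Theorem~\ref{thm:auxiliary}, where $U+V_J\subseteq V_T$ is compared with $\dim V_T\le r$.

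The remedies you sketch would not replace this step. The cover $A$ is chosen only after $V$ and $B$ are fixed. Rescaling coordinates of $M$, re-choosing $V$, or enlarging $B$ to $r+1+j$ columns therefore just produces a new cover with a new, uncontrolled overlap, and the argument becomes circular. The fallback induction on $t$ is also unspecified. You would need to justify $R_{t-j}\le r$ for the shortened code, which is not automatic, since deleting columns can make covering harder. You would also need to show that merging the two subcodes keeps the total support within $2r+1$. What you were missing is not extra freedom in the construction but the observation that the cover's span is too small to absorb any nonzero $j$.
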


The geometric description of generalized covering by spans of parity-check columns is central to this argument; related geometric formulations were developed by Alfarano, Marino, Neri, and Trombetti~\cite{AMNT2026}. The BCH application concerns fixed-error families whose rates tend to one, outside the fixed-rate regime above.

Section~2 sets out the column formulation and the prior results used in the proof. Section~3 develops the auxiliary-code criterion. Section~4 proves the small-redundancy theorem by bounding the dimension of a hypothetical counterexample. Section~5 treats the BCH consequence, and Section~6 discusses the remaining cases.

\section{Syndrome spaces and established bounds}\label{sec:prelim}

We use parity-check columns to express both sides of~\eqref{eq:conjecture}. This lets us construct a syndrome space for the covering problem and then measure the dimension of the subcode forced by its cover.

Write $\rho=n-k$ for the redundancy, and let
\[
H=(h_1,\ldots,h_n)\in\F_q^{\rho\times n}
\]
be a full-row-rank parity-check matrix of $C$. For a coordinate set $S\subseteq\{1,\ldots,n\}$, write $H_S$ for the matrix of columns indexed by $S$ and $V_S=\spanof_{\F_q}\{h_i:i\in S\}$. For a vector, its support is the set of its nonzero coordinates; for a subspace, its support is the union of the supports of its vectors. The subcode of $C$ supported on $S$ is the kernel of $H_S$, with zero coordinates inserted outside $S$. Consequently,
\begin{equation}\label{eq:weight-nullity}
d_t(C)=\min\{|S|:|S|-\rank(H_S)\ge t\}.
\end{equation}
This is the standard column description of generalized Hamming weights~\cite{Wei1991}.

The covering definition has the corresponding syndrome form~\cite[Section III, Definition 1 and Lemma 5]{EFS2021}:
\begin{equation}\label{eq:cover-span}
R_t(C)=\max_{\substack{U\le\F_q^\rho\\\dim U=t}}
\min\{|S|:U\subseteq V_S\}.
\end{equation}
Indeed, correcting entries in $S$ means expressing the prescribed syndromes using columns indexed by $S$. Tuples whose syndrome span has dimension less than $t$ are covered by extending that span to a $t$-space. In particular,
\begin{equation}\label{eq:basic-bounds}
t\le R_t(C)\le\rho,
\qquad d_t(C)\le\rho+t,
\end{equation}
where the last inequality is the generalized Singleton bound. Zero columns and repeated columns are allowed throughout.

We will also use the ball-covering inequality. For integers $Q\ge2$ and $0\le r\le n$, define
\[
V_Q(n,r)=\sum_{i=0}^r\binom ni(Q-1)^i.
\]
Counting the $t\times n$ error matrices supported on at most $r$ columns gives
\begin{equation}\label{eq:sphere}
V_{q^t}(n,R_t(C))\ge q^{t\rho}.
\end{equation}
Here the right side is the number of $t$-tuples of syndromes. We use the convenient estimate
\begin{equation}\label{eq:binomial-volume}
V_Q(n,r)\le\binom nr Q^r.
\end{equation}
To verify it, expand $Q^r=(1+(Q-1))^r$ and use
$\binom ni\le\binom nr\binom ri$ for $0\le i\le r$.

For clarity, we collect the exact prior cases needed later. Each item below gives~\eqref{eq:conjecture} for the indicated range:
\begin{enumerate}
\item $t=1$, by the classical packing--covering comparison;
\item $t=2$, by~\cite[Theorem 5]{YS2026};
\item $R_t(C)=t$, by~\cite[Lemma 6]{YS2026};
\item $t\ge\rho-4$, by~\cite[Lemma 7]{YS2026};
\item $t\ge3$ and $k\le5t-2$, by~\cite[Lemma 9]{YS2026};
\item $3R_t(C)\ge2\rho$, by~\cite[Lemma 11]{YS2026}.
\end{enumerate}
The first four items imply the redundancy-seven case~\cite[Corollary 8]{YS2026}.

The proof of Theorem~\ref{thm:redundancy} has two parts. First, we choose a syndrome space using a short auxiliary code; its minimum distance ensures that a short cover creates enough column nullity to apply~\eqref{eq:weight-nullity}. Second, for the remaining parameters, a hypothetical violation of~\eqref{eq:conjecture} forces a large generalized weight in the dual code. A classical weight estimate then bounds the dimension, and~\eqref{eq:sphere} excludes the remaining lengths.

\section{The auxiliary-code criterion}\label{sec:auxiliary}

To obtain an upper bound on $d_t(C)$ from covering, it is enough to choose one syndrome space whose short cover produces a set of columns with nullity at least $t$. The next result specifies how an auxiliary code makes this possible.

\begin{theorem}\label{thm:auxiliary}
Let $C$ be an $[n,k]_q$ linear code, let $1\le t\le\min\{k,\rho\}$ with $\rho=n-k$, and put $r=R_t(C)$. Suppose there is an $[L,t,d]_q$ linear code $D$ such that
\[
r<L\le\rho,
\qquad d\ge r-t+2.
\]
Then $d_t(C)\le r+L$.
\end{theorem}

For the proof, choose a set of coordinate indices $A=\{a_1,\ldots,a_L\}\subseteq\{1,\ldots,n\}$ whose columns in $H$ are independent. The map
\[
\phi:\F_q^L\longrightarrow V_A,\qquad
\phi(x_1,\ldots,x_L)=\sum_{\ell=1}^L x_\ell h_{a_\ell}
\]
is an isomorphism. We use $U=\phi(D)$ as the syndrome space to be covered. Its cover and the original columns will produce the coordinate set required by~\eqref{eq:weight-nullity}.

\begin{proof}
By~\eqref{eq:cover-span}, there is a set of coordinate indices $T\subseteq\{1,\ldots,n\}$ with $|T|\le r$ and $U\subseteq V_T$. Removing indices of dependent columns preserves its span, so we may assume that the columns indexed by $T$ are independent.

Put $J=A\cap T$, $j=|J|$, and $b=\dim(U\cap V_J)$. If $b>0$, the inverse image of $U\cap V_J$ is a $b$-dimensional subcode of $D$ supported on $j$ coordinates. Its restriction to those coordinates has minimum distance at least $d$, so Singleton gives $j-b+1\ge d$. It follows that
\[
\dim(U+V_J)=t+j-b\ge t+d-1\ge r+1.
\]
But $U+V_J\subseteq V_T$ and $\dim V_T=|T|\le r$, a contradiction. Thus $b=0$.

Both $U$ and $V_J$ lie in $V_A\cap V_T$, and their intersection is zero. Hence $\dim(V_A\cap V_T)\ge t+j$. The independence of $A$ and $T$ now gives
\begin{align*}
|A\cup T|-\rank(H_{A\cup T})
&=L+|T|-j-\bigl(L+|T|-\dim(V_A\cap V_T)\bigr)\\
&=\dim(V_A\cap V_T)-j\ge t.
\end{align*}
Applying~\eqref{eq:weight-nullity} yields
$d_t(C)\le|A\cup T|\le L+r$.
\end{proof}

The length of the auxiliary code determines the resulting bound. An auxiliary of length $r+1$ gives $2r+1$, whereas length $r+2$ gives the conjectured $2r+2$. The distance requirement is the same in both cases.

\begin{proof}[Proof of Theorem~\ref{thm:alphabet}]
If $r=\rho$, Singleton gives $d_t(C)\le\rho+t\le2r$. Otherwise $r+1\le\rho$. Since $q\ge r$, an extended Reed--Solomon code provides an $[r+1,t,r-t+2]_q$ auxiliary code~\cite{MS1977}. Explicitly, evaluate polynomials of degree at most $t-1$ at $r$ distinct elements of $\F_q$, and append their coefficient of $x^{t-1}$. A nonzero polynomial gives at most $t-1$ zero coordinates, including the appended coordinate, so the minimum distance is at least $r-t+2$. Theorem~\ref{thm:auxiliary} gives $d_t(C)\le2r+1$.
\end{proof}

Small auxiliaries also work over fields that do not satisfy the large-alphabet condition. The following four pairs are the ones needed in the bounded-redundancy proof.

\begin{corollary}\label{cor:four-pairs}
For every finite field,~\eqref{eq:conjecture} holds whenever
\[
(t,R_t(C))\in\{(3,4),(3,5),(4,5),(4,6)\}.
\]
\end{corollary}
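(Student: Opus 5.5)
The plan is to apply Theorem~\ref{thm:auxiliary} with an auxiliary code of length $L=r+2$, where $r=R_t(C)$. That gives $d_t(C)\le 2r+2$, which is \eqref{eq:conjecture}. The distance requirement $d\ge r-t+2$ then reads as follows for the four pairs $(t,r)$:
\[
[6,3,3]_q,\qquad [7,3,4]_q,\qquad [7,4,3]_q,\qquad [8,4,4]_q .
\]
Before invoking the theorem I have to dispose of the case $L=r+2>\rho$, i.e.\ $r\ge\rho-1$. There the generalized Singleton bound \eqref{eq:basic-bounds} already suffices: $d_t(C)\le\rho+t\le r+1+t\le 2r+1$, since $t\le r$. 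So I may assume $r+2\le\rho$, which together with $r<L$ gives the length hypothesis of Theorem~\ref{thm:auxiliary}. The hypothesis $t\le\min\{k,\rho\}$ is part of the setting.

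The main step, and the only real obstacle, is that these four codes must exist over \emph{every} finite field, including $\F_2$ and small odd fields where no MDS code of these lengths exists. My approach is to build one code with a $\{0,1\}$ generator matrix that works in every characteristic, and derive the others from it. Take the $4\times 8$ generator matrix $G=[\,I_4\mid J-I_4\,]$, where $J$ is the all-ones matrix; over $\F_2$ this is the extended Hamming code. For a message $x\ne0$ of weight $w$, with $s=\sum_j x_j$, the parity part of $xG$ has coordinates $s-x_j$. I will check $\mathrm{wt}(xG)\ge4$ case by case:
\begin{enumerate}
\item $w=1$: the parity part has three nonzero entries $s$, so the weight is $4$.
\item $w=2$: if $s\ne0$, the two coordinates with $x_j=0$ give $s\ne0$. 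If $s=0$, the two coordinates with $x_j\ne0$ give $-x_j\ne0$. Either way the weight is at least $4$.
\item $w=3$: if $s\ne0$, the single zero coordinate contributes. Otherwise the coordinates $-x_j$ with $x_j\ne0$ contribute.
\item $w=4$: already at least $4$.
\end{enumerate}
Hence $G$ generates an $[8,4,4]_q$ code for every $q$, covering the pair $(4,6)$.

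The remaining codes follow by standard operations. Shortening on one coordinate gives an $[7,3,4]_q$ code, covering $(3,5)$. Puncturing one coordinate gives an $[7,4,\ge3]_q$ code, covering $(4,5)$; its dimension stays $4$ because the distance $4$ exceeds $1$. Shortening and then puncturing gives an $[6,3,\ge3]_q$ code, covering $(3,4)$. Each case then falls under Theorem~\ref{thm:auxiliary} with $L=r+2$, completing the corollary for all finite fields.

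For $q\ge r$ one could instead quote Theorem~\ref{thm:alphabet} directly. The uniform construction above makes that split unnecessary.
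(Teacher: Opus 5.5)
Your proof is correct. Its skeleton matches the paper's: the Singleton bound when $\rho\le r+1$, and otherwise Theorem~\ref{thm:auxiliary} with $L=r+2$, which reduces everything to the existence of $[6,3,\ge3]_q$, $[7,3,\ge4]_q$, $[7,4,\ge3]_q$ and $[8,4,\ge4]_q$ codes. You differ from the paper only in how these codes are produced.

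The paper gives three separate constructions:
\begin{enumerate}
\item the two distance-three codes are kernels of $3\times L$ matrices whose columns are distinct projective points including a basis;
\item the $[7,3,\ge4]_q$ code uses the seven nonzero $\{0,1\}$-vectors of $\F_q^3$ as generator columns;
\item the $[8,4,\ge4]_q$ code comes from evaluating affine functions on $\{0,1\}^3$.
\end{enumerate}
For the last two, the distance comes from a pairing argument: a nonconstant affine function vanishes on at most four points of the cube.

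You instead verify one characteristic-free generator matrix $[I_4\mid J-I_4]$ by a case check on the message weight. That check is sound, since the parity coordinates $s-x_j$ always supply the missing weight. You then get the other three codes by shortening and puncturing. The dimension counts hold because no column of $G$ is zero and all distances exceed one.

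Your route needs a single weight verification and makes the nesting of the four auxiliaries explicit. The paper's $[7,3,\ge4]_q$ code is in fact the shortening of its affine-function code at the origin, so the same economy was available there. The paper's projective-point construction, in turn, gives $[L,L-3,\ge3]_q$ codes for every $4\le L\le q^2+q+1$. That makes it the more flexible tool for auxiliaries covering further pairs $(t,t+1)$. It also shows exactly where the method stops, namely the nonexistence of a binary $[8,5,3]$ code.
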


\begin{proof}
Put $r=R_t(C)$. If $\rho\le r+1$, Singleton gives $d_t(C)\le\rho+t\le2r+1$. For $\rho\ge r+2$, Theorem~\ref{thm:auxiliary} reduces the four cases to the existence of codes with parameters
\[
[6,3,\ge3]_q,\quad[7,3,\ge4]_q,\quad
[7,4,\ge3]_q,\quad[8,4,\ge4]_q,
\]
respectively.

For the first and third codes, take a $3\times L$ matrix whose columns represent $L=6$ or $7$ distinct one-dimensional subspaces of $\F_q^3$, including a basis. There are $q^2+q+1\ge7$ such subspaces. Its kernel has dimension $L-3$ and minimum distance at least three, since no column is zero and no two are proportional.

For the $[7,3,\ge4]_q$ code, take the seven nonzero vectors in $\{0,1\}^3$ as generator columns. A nonzero linear functional vanishes on at most four points of the full cube: pairing along a coordinate with a nonzero coefficient gives at most one zero per pair. Since the origin is a zero, at most three of the seven columns are zeros. The generator has rank three, and every nonzero word has weight at least four.

Finally, evaluate affine linear functions on the eight points of $\{0,1\}^3$. The evaluation map has rank four, as can be checked at the origin and the three standard basis vectors. Every nonconstant affine function has at most four zeros by the same pairing argument, and a nonzero constant has no zeros. This gives the last auxiliary code.
\end{proof}

For example, if $R_3(C)=4$, the six-column auxiliary above forces $d_3(C)\le10$ whenever $\rho\ge6$; Singleton handles smaller redundancy. The same construction is available for every alphabet and every length of $C$. The method requires an auxiliary with the stated parameters. A binary $[8,5,3]$ code does not exist: a rank-three parity-check matrix would need eight distinct nonzero columns, whereas $\F_2^3$ has only seven. Thus auxiliary existence itself imposes a genuine restriction on this sufficient condition.

\section{Codes of redundancy at most fourteen}\label{sec:small}

We now prove Theorem~\ref{thm:redundancy}. The cases supplied by Section~\ref{sec:prelim} and Corollary~\ref{cor:four-pairs} leave finitely many pairs of order and covering radius for each redundancy. To deal with arbitrarily long codes, we first show that a failure of the conjecture bounds the dimension of the code.

Recall that $\rho=n-k$ and $r=R_t(C)$. A failure means $d_t(C)\ge2r+3$. Duality for generalized weights translates this into a lower bound on a weight of $C^\perp$~\cite{Wei1991}. Combining that lower bound with the classical shortening estimate of Tsfasman and Vl\u{a}du\c{t}~\cite{TV1995}, in the form recalled in~\cite[Lemma 2]{YS2026}, yields the following explicit form. We include the short derivation to identify the indices and the uniform length bound used here.

\begin{lemma}\label{lem:dimension-cap}
Suppose $d_t(C)\ge2r+3$, where $r=R_t(C)$, and set
\[
a=\rho-2r+t-2,
\qquad b=\rho-2r-1.
\]
Then $1\le a<\rho$, and
\begin{equation}\label{eq:dimension-cap}
k\le K(q,\rho,t,r):=
\min_{a<h\le\rho}
\left\lfloor
\frac{(q^h-q^{h-a})h-b(q^h-1)}{q^{h-a}-1}
\right\rfloor.
\end{equation}
The minimum is over integers $h$.
\end{lemma}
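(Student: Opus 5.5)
The plan is to translate the failure $d_t(C)\ge2r+3$ into a lower bound on one generalized weight of the dual code $C^\perp$, and then combine the Tsfasman--Vl\u{a}du\c{t} shortening inequality with the generalized Singleton bound for $C^\perp$ to obtain a bound on $k$.

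\textbf{Step 1: dual weight via Wei duality.} I will use Wei's duality theorem~\cite{Wei1991}. The sets
\[
\{d_i(C):1\le i\le k\}\quad\text{and}\quad\{n+1-d_j(C^\perp):1\le j\le\rho\}
\]
partition $\{1,\dots,n\}$, where $C^\perp$ is an $[n,\rho]_q$ code. Since $d_t(C)\ge2r+3$ and the $d_i$ are strictly increasing, at most $t-1$ elements of $\{1,\dots,2r+2\}$ are weights of $C$. So at least $2r+3-t$ of these elements have the form $n+1-d_j(C^\perp)$. Because the $d_j(C^\perp)$ are strictly increasing, these are the values with the largest indices, namely $j=\rho,\rho-1,\dots,\rho-(2r+3-t)+1=a$. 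In particular $n+1-d_a(C^\perp)\le2r+2$, which gives
\[
d_a(C^\perp)\ge n-2r-1=k+b.
\]

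\textbf{Step 2: range of $a$.} The generalized Singleton bound~\eqref{eq:basic-bounds} gives $2r+3\le d_t(C)\le\rho+t$, so $a=\rho-2r+t-2\ge1$. Also $t\le r$ by~\eqref{eq:basic-bounds}, so $t<2r+2$ and hence $a<\rho$. The same check shows that the index set $a,\dots,\rho$ in Step 1 lies inside $\{1,\dots,\rho\}$.

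\textbf{Step 3: shortening estimate.} For each integer $h$ with $a<h\le\rho$, I will apply the Tsfasman--Vl\u{a}du\c{t} estimate in the form of~\cite[Lemma 2]{YS2026}:
\[
(q^h-1)\,d_a(C^\perp)\le(q^h-q^{h-a})\,d_h(C^\perp).
\]
This is the averaging statement that an $a$-dimensional subspace of an $h$-dimensional subcode has support at most the stated fraction of the support of that subcode. I will then bound $d_h(C^\perp)\le n-\rho+h=k+h$ by the generalized Singleton bound. Combining with Step 1 gives
\[
(k+b)(q^h-1)\le(k+h)(q^h-q^{h-a}).
\]
Collecting the terms in $k$ yields
\[
k(q^{h-a}-1)\le h(q^h-q^{h-a})-b(q^h-1).
\]
Since $h>a$, we have $q^{h-a}-1>0$, so we may divide by it. Because $k$ is an integer, we may then take the floor. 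Minimizing over $h$ gives~\eqref{eq:dimension-cap}.

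\textbf{Main obstacle.} The delicate point is the index bookkeeping in Wei duality. One must show that the dual index is exactly $a$, and that the bound is $n-2r-1$ rather than something depending on the actual value of $d_t(C)$. This is achieved by counting only the integers up to $2r+2$, not up to $d_t(C)$. A secondary point is to quote the shortening inequality with the correct exponent $h-a$; I will state it explicitly before using it.
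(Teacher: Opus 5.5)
Your proof is correct, and it differs from the paper's only in Step 1.

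\textbf{Steps 2 and 3.} These match the paper's argument. The paper obtains $d_h(C^\perp)\le k+h$ by shortening $C^\perp$ at $\rho-h$ independent coordinates, which is exactly the generalized Singleton bound you quote. It then averages over $a$-dimensional subspaces of that $h$-dimensional subcode. The only difference is that the paper re-derives the factor $(q^h-q^{h-a})/(q^h-1)$ by the hyperplane incidence count, whereas you cite it.

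\textbf{Step 1.} The paper mentions Wei duality before the lemma but does not use it in the proof. Instead it proves $d_a(C^\perp)\ge k+b$ directly from the column description~\eqref{eq:weight-nullity}. An $a$-dimensional subcode of $C^\perp$ has the form $BH$ with $\rank B=a$. If its support had at most $n-2r-2$ coordinates, at least $2r+2$ columns of $H$ would lie in $\ker B$, a space of dimension $\rho-a=2r-t+2$. Those columns would then have nullity at least $t$, forcing $d_t(C)\le 2r+2$, a contradiction.

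\textbf{Your route.} Going through Wei duality is shorter once the theorem is granted, and your index bookkeeping is sound. You count only up to $2r+2$, note that at most $t-1$ weights of $C$ lie there, and observe that the dual values $n+1-d_j(C^\perp)$ in that range form a terminal segment of indices containing $a$. The two facts this implicitly needs are both available: $2r+2<n$ follows from $d_t(C)\le n$, and $a\ge1$ is your Step 2.

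\textbf{What each approach buys.} The paper's argument is self-contained, using only~\eqref{eq:weight-nullity}. It also shows why the dual index is exactly $a$: that is the codimension at which $\ker B$ has room for $2r+2$ columns with nullity $t$. In effect it proves, for this instance, precisely the consequence of Wei duality that you invoke.
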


\begin{proof}
Singleton gives $2r+3\le\rho+t$, hence $a\ge1$. Since $r\ge t$, we also have $a<\rho$.

We first obtain the needed dual-weight inequality directly. An $a$-dimensional subcode of $C^\perp$ is generated by $BH$ for a rank-$a$ matrix $B\in\F_q^{a\times\rho}$. If its support has size at most $n-2r-2$, at least $2r+2$ columns of $H$ lie in $\ker B$, whose dimension is $\rho-a=2r-t+2$. Those columns have nullity at least $t$, contradicting $d_t(C)\ge2r+3$. Therefore
\begin{equation}\label{eq:dual-lower}
d_a(C^\perp)\ge n-2r-1=k+b.
\end{equation}

Fix an integer $h$ with $a<h\le\rho$. Shorten $C^\perp$ at $\rho-h$ independent coordinates of its generator matrix $H$. The resulting code has dimension $h$ and length at most $k+h$. For a uniformly chosen $a$-dimensional subcode of this shortened code, a fixed active coordinate belongs to its support with probability
\[
\lambda=\frac{q^h-q^{h-a}}{q^h-1}.
\]
Indeed, the proportion of $a$-subspaces contained in the kernel of a nonzero coordinate functional is $(q^{h-a}-1)/(q^h-1)$, by counting incidences between $a$-subspaces and hyperplanes. Some $a$-subcode consequently has support at most $\lambda(k+h)$. With~\eqref{eq:dual-lower}, this gives
\[
k+b\le\frac{q^h-q^{h-a}}{q^h-1}(k+h).
\]
Rearranging and taking the integer part proves the bound for each $h$, and hence their minimum.
\end{proof}

The key point is that~\eqref{eq:dimension-cap} bounds every possible dimension under the failure assumption. With $N=\rho+K(q,\rho,t,r)$, we have $n\le N$, so the inequality
\begin{equation}\label{eq:finite-exclusion}
\binom Nr<q^{t(\rho-r)}
\end{equation}
contradicts~\eqref{eq:sphere} by~\eqref{eq:binomial-volume}.

\begin{proof}[Proof of Theorem~\ref{thm:redundancy}]
Assume a counterexample with $\rho\le14$, and put $r=R_t(C)$. The established cases in Section~\ref{sec:prelim}, the generalized Singleton bound, and Corollary~\ref{cor:four-pairs} reduce the parameters to
\begin{equation}\label{eq:remaining-range}
\begin{gathered}
8\le\rho\le14,\qquad 3\le t<\rho-4,\qquad
t+1\le r\le\left\lfloor\frac{\rho+t-3}{2}\right\rfloor,\\
3r<2\rho,\qquad
(t,r)\notin\{(3,4),(3,5),(4,5),(4,6)\}.
\end{gathered}
\end{equation}
Theorem~\ref{thm:alphabet} also gives $q<r$. Since $3r<2\rho$ and $\rho\le14$, we have $r\le9$. The only possible prime-power alphabets are therefore
$q\in\{2,3,4,5,7,8\}$.

For each tuple in~\eqref{eq:remaining-range} with $q<r$, compute $K=K(q,\rho,t,r)$ from Lemma~\ref{lem:dimension-cap}. If $K\le5t-2$, the known low-dimension case excludes a counterexample. In every remaining tuple,~\eqref{eq:finite-exclusion} holds. Appendix~\ref{app:finite} specifies the entire finite range and its arithmetic: there are $109$ tuples, of which $27$ satisfy the low-dimension criterion and $82$ satisfy the strict binomial inequality. These calculations were verified with exact integers. Thus every remaining tuple contradicts a necessary condition for a counterexample, proving the theorem.
\end{proof}

\section{A consequence for primitive BCH codes}\label{sec:bch}

The conjecture can also be tested by comparing two nested code families. This gives a short consequence for BCH codes whose redundancy is not bounded by fourteen. Here $\bch(e,m)$ denotes the binary primitive narrow-sense BCH code of length $2^m-1$ and designed distance $2e+1$; $m$ is the extension degree and the alphabet is $\F_2$.

We use the nested-code inequality~\cite[Lemma III.1]{XY2026}:
\begin{equation}\label{eq:nested}
C\subseteq C',\quad \dim C'-\dim C\ge t
\quad\Longrightarrow\quad d_t(C')\le R_t(C).
\end{equation}
To see its direction, choose $t$ vectors of $C'$ independent modulo $C$. Subtracting arbitrary vectors of $C$ leaves $t$ independent vectors of $C'$, whose joint support is at least $d_t(C')$. Their distance to $C$ in the common-support metric gives~\eqref{eq:nested}.

Stable covering estimates for primitive BCH codes were developed in~\cite{BZ2026} and sharpened in~\cite{XYZ2026}. For fixed $e,t\ge2$ and all sufficiently large $m$, the latter gives~\cite[Theorem 1.2]{XYZ2026}
\begin{equation}\label{eq:bch-bounds}
et\le R_t(\bch(e,m))\le(t+1)e-1.
\end{equation}
We use the upper estimate at error parameter $e+1$ and the lower estimate at $e$.

\begin{proposition}\label{prop:bch}
Fix integers $e,t\ge2$. For all sufficiently large extension degrees $m$,
\[
d_t(\bch(e,m))\le(e+1)t+e\le2R_t(\bch(e,m)).
\]
Consequently,
\[
\delta_t(\bch(e,m))\le R_t(\bch(e,m))-1.
\]
\end{proposition}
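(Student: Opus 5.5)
The plan is to apply the nested-code inequality~\eqref{eq:nested} with $C=\bch(e+1,m)$ and $C'=\bch(e,m)$, and then to combine it with the two sides of~\eqref{eq:bch-bounds}. Primitive narrow-sense BCH codes are nested: increasing the designed distance adds zeros, so $\bch(e+1,m)\subseteq\bch(e,m)$. The dimension gap is the size of the cyclotomic coset of $2e+1$. Since $2e+1$ is odd, it is a new coset leader for $e\ge1$. For fixed $e$ and large $m$ this coset has full size $m$, and in any case its size is at least $1$. We need the gap to be at least $t$. Coset sizes of $\alpha^{2e+1}$ equal $m$ once $2^m-1$ exceeds the relevant bound, so for $m\ge t$ and $m$ large the gap is $m\ge t$.

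With the gap verified, \eqref{eq:nested} gives
\[
d_t(\bch(e,m))\le R_t(\bch(e+1,m)).
\]
The upper estimate of~\eqref{eq:bch-bounds} at error parameter $e+1$ then gives
\[
R_t(\bch(e+1,m))\le(t+1)(e+1)-1=(e+1)t+e,
\]
which is valid for large $m$, since $e+1\ge2$ and $t\ge2$ are fixed. This is the first inequality. For the second, the lower estimate of~\eqref{eq:bch-bounds} at $e$ gives $2R_t(\bch(e,m))\ge2et$. So it suffices that $(e+1)t+e\le2et$, that is, $t+e\le et$, or equivalently $(e-1)(t-1)\ge1$. This holds because $e,t\ge2$. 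Taking $m$ large enough for both instances of~\eqref{eq:bch-bounds} and for the coset-size claim completes the chain.

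For the consequence, $d_t\le2R_t$ gives
\[
\delta_t=\lfloor(d_t-1)/2\rfloor\le\lfloor(2R_t-1)/2\rfloor=R_t-1.
\]
The only step needing care is the dimension gap $\dim\bch(e,m)-\dim\bch(e+1,m)\ge t$. I would handle it with the standard fact that for $m$ large relative to $e$, the coset of $2e+1$ modulo $2^m-1$ has size exactly $m$. Indeed, $2^i(2e+1)\equiv2e+1$ with $0<i<m$ is impossible when $2^i(2e+1)<2^m-1$ and the wraparound cases are excluded by size considerations. Alternatively, the standard dimension formula $\dim\bch(e,m)=2^m-1-em$ for $2e-1<2^{\lceil m/2\rceil+1}$ gives the gap $m$ directly.
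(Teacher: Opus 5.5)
Your proposal is correct and follows the paper's proof essentially step for step. It uses the nested inclusion $\bch(e+1,m)\subseteq\bch(e,m)$ with dimension gap $m\ge t$, then the nested-code inequality, the upper estimate at $e+1$, the lower estimate at $e$, and the elementary inequality $(e+1)t+e\le2et$; your form $(e-1)(t-1)\ge1$ is the paper's $(e-1)(t-2)+(e-2)\ge0$ rewritten. One imprecision: $2e+1$ is not always a new coset leader for small $m$ (for example, $5$ lies in the coset $\{3,6,5\}$ of $3$ modulo $7$), but you only need it for $m$ large relative to $e$, where your cited formula $\dim\bch(e,m)=2^m-1-em$ settles the gap.
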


\begin{proof}
For sufficiently large $m$, the standard binary cyclotomic-coset calculation gives redundancies $em$ and $(e+1)m$ for $\bch(e,m)$ and $\bch(e+1,m)$, respectively~\cite[Sections 1--2]{XYZ2026}. We may also require $m\ge t$. Since $\bch(e+1,m)\subseteq\bch(e,m)$, their dimension difference is $m$. Apply~\eqref{eq:nested} and~\eqref{eq:bch-bounds}, choosing $m$ large enough for both adjacent families. We obtain
\begin{align*}
d_t(\bch(e,m))
&\le R_t(\bch(e+1,m))\\
&\le(t+1)(e+1)-1\\
&=(e+1)t+e\\
&\le2et\\
&\le2R_t(\bch(e,m)).
\end{align*}
The penultimate inequality follows from $(e-1)t-e=(e-1)(t-2)+(e-2)\ge0$. Taking the integer part of $(d_t-1)/2$ gives the asserted strict gap between the radii.
\end{proof}

The comparison $d_t(\bch(e,m))\le2R_t(\bch(e,m))$ also follows from the earlier bounds in~\cite[Theorem 1.1]{BZ2026}. For $e=2$, use its upper bound without the logarithmic term at error parameter three. For $e\ge3$, its general estimate gives, for all sufficiently large $m$,
\[
R_t(\bch(e+1,m))\le(e+1)t+e+\lceil\log_2 t\rceil.
\]
Here $\lceil\log_2 t\rceil\le t-1\le(e-1)t-e$, so the logarithmic term is absorbed by the same comparison with $2et$. The upper estimate in~\eqref{eq:bch-bounds} additionally gives the explicit bound $d_t(\bch(e,m))\le(e+1)t+e$ throughout the stated range.

This consequence fixes $e$ and $t$ before increasing $m$. It uses established bounds for adjacent BCH families, while Theorem~\ref{thm:redundancy} applies to all orders of every code in its redundancy range. Exact second-order covering results for the double- and triple-error-correcting BCH families~\cite{YSch2025,EZ2026} concern a different task: determining the radius itself, rather than comparing it with a generalized Hamming weight.

\section{Conclusion}\label{sec:conclusion}

An auxiliary code with sufficient minimum distance supplies a syndrome space whose short cover forces a subcode with small support. This gives the large-alphabet inequality and, together with classical dual-weight bounds and the established cases of Yu and Schwartz, proves the packing--covering conjecture through redundancy fourteen over every finite field.

Beyond this range, the auxiliary criterion and the dimension cap remain available. The first binary parameter triple not excluded by the reductions used here is $\rho=15$, $t=3$, and $R_3(C)=6$; the cap gives $k\le78$. This identifies a remaining parameter range rather than a counterexample. Possible directions include replacing the auxiliaries that do not exist or obtaining additional restrictions on the short covers of syndrome spaces.

\appendix
\section{The finite inequalities}\label{app:finite}

This appendix specifies the complete set of integer calculations used for
redundancy at most fourteen.  Put
\[
 \mathcal Q=\{2,3,4,5,7,8\},\qquad
 \mathcal E=\{(3,4),(3,5),(4,5),(4,6)\}.
\]
For each $q\in\mathcal Q$, the parameter triples left after the reductions
in the proof are precisely
\begin{equation}\label{eq:finite-universe}
\begin{split}
\mathcal S_q=\{(\rho,t,r)\in\mathbb Z^3:\;&
8\le \rho\le14,\quad 3\le t\le\rho-5,\quad q<r,\\
&t+1\le r\le\lfloor(\rho+t-3)/2\rfloor,\quad
3r<2\rho,\quad (t,r)\notin\mathcal E\}.
\end{split}
\end{equation}
Thus there is no additional length parameter to enumerate.  For a triple
in $\mathcal S_q$, define $a=\rho-2r+t-2$, $b=\rho-2r-1$, and
\begin{equation}\label{eq:finite-cap}
 K=\min_{a<h\le\rho}
 \left\lfloor
 \frac{(q^h-q^{h-a})h-b(q^h-1)}{q^{h-a}-1}
 \right\rfloor,\qquad N=\rho+K.
\end{equation}
The minimum is over integers $h$; whenever a value of $h$ is recorded
below, it is the smallest minimizer.  Separate the triples into
$\mathcal L_q=\{(\rho,t,r)\in\mathcal S_q:K\le5t-2\}$ and
$\mathcal B_q=\mathcal S_q\setminus\mathcal L_q$.
The former are excluded by the low-dimension result.  For the latter,
the required strict inequality is
\begin{equation}\label{eq:finite-ratio}
f_q(\rho,t,r):=\frac{\binom{\rho+K}{r}}{q^{t(\rho-r)}}<1.
\end{equation}

Table~\ref{tab:finite-summary} gives the cardinalities of all three sets
and the exact largest ratio
\[
M_q=\max_{(\rho,t,r)\in\mathcal B_q}f_q(\rho,t,r).
\]
Its final column displays an exact expression for $M_q$, followed by a
simple strict rational bound.  Every one of the $109$ triples belongs
to one of the two exclusion classes: $27$ lie in $\mathcal L_q$ and
$82$ lie in $\mathcal B_q$.
For $q=2,3,4$, the nine triples in $\mathcal L_q$ are exactly those
with $a=1$; choosing $h=3$ already gives
\[
 K\le q(t+1)-3+\left\lfloor\frac{t+1}{q+1}\right\rfloor
 \le5t-2.
\]
For $q=5,7,8$, the set $\mathcal L_q$ is empty.

\begin{table}[htbp]
\centering
\normalsize
\renewcommand{\arraystretch}{1.5}
\setlength{\tabcolsep}{4pt}
\begin{tabular}{rrrrrrrr}
\hline
$q$ & $|\mathcal S_q|$ & $|\mathcal L_q|$ & $|\mathcal B_q|$
& $(\rho,t,r)$ & $h$ & $K$ & $M_q$ and a strict bound\\
\hline
2 & 25 & 9 & 16 & (14,3,6) & 5 & 36 & $\binom{50}{6}/2^{24}<1$ \\
3 & 25 & 9 & 16 & (14,3,6) & 4 & 116 & $\binom{130}{6}/3^{24}<1/40$ \\
4 & 25 & 9 & 16 & (14,3,6) & 4 & 251 & $\binom{265}{6}/4^{24}<1/600$ \\
5 & 25 & 0 & 25 & (14,3,6) & 4 & 464 & $\binom{478}{6}/5^{24}<1/3000$ \\
7 & 7 & 0 & 7 & (14,7,8) & 4 & 2796 & $\binom{2810}{8}/7^{42}<10^{-12}$ \\
8 & 2 & 0 & 2 & (14,8,9) & 3 & 581 & $\binom{595}{9}/8^{40}<10^{-16}$ \\
\hline
\end{tabular}
\medskip
\caption{Complete finite coverage and the largest binomial ratio for
each alphabet.  The recorded triple attains $M_q$, and every displayed
inequality certifies $M_q<1$.}
\label{tab:finite-summary}
\end{table}

For completeness, Table~\ref{tab:finite-binary} lists all sixteen
binary binomial checks individually.  Each entry in its penultimate
column is strictly smaller than $2^s$, where $s=t(\rho-r)$.
The closest comparison is
\[
 \binom{50}{6}=15890700<16777216=2^{24};
\]
the largest nonbinary ratio is attained when $q=3$, and its comparison is
\[
 \binom{130}{6}=5963412000<282429536481=3^{24}.
\]
All entries in both tables can be reproduced from
\eqref{eq:finite-universe}--\eqref{eq:finite-ratio} using integer
arithmetic alone: enumerate the displayed triples, minimize over the
displayed finite interval of $h$, discard $K\le5t-2$, and compare the
remaining ratios by cross multiplication.  In particular,
$\binom Nr=\prod_{j=0}^{r-1}(N-j)/r!$ gives the only binomial
coefficient needed in each comparison.  These are parameter
inequalities, not an enumeration of codes.

\begin{table}[htbp]
\centering
\small
\renewcommand{\arraystretch}{1.15}
\begin{tabular}{rrrrrrrr}
\hline
$\rho$ & $t$ & $r$ & $h$ & $K$ & $N$ & $\binom Nr$ & $s$\\
\hline
11 & 5 & 6 & 4 & 26 & 37 & 2324784 & 25 \\
12 & 5 & 6 & 5 & 57 & 69 & 119877472 & 30 \\
12 & 6 & 7 & 5 & 30 & 42 & 26978328 & 30 \\
13 & 3 & 6 & 4 & 16 & 29 & 475020 & 21 \\
13 & 5 & 6 & 6 & 120 & 133 & 6856577728 & 35 \\
13 & 5 & 7 & 4 & 26 & 39 & 15380937 & 30 \\
13 & 6 & 7 & 6 & 66 & 79 & 2898753715 & 36 \\
13 & 7 & 8 & 5 & 34 & 47 & 314457495 & 35 \\
14 & 3 & 6 & 5 & 36 & 50 & 15890700 & 24 \\
14 & 4 & 7 & 4 & 21 & 35 & 6724520 & 28 \\
14 & 5 & 6 & 7 & 247 & 261 & 414356272512 & 40 \\
14 & 5 & 7 & 5 & 57 & 71 & 1329890705 & 35 \\
14 & 6 & 7 & 7 & 138 & 152 & 323295330680 & 42 \\
14 & 6 & 8 & 5 & 30 & 44 & 177232627 & 36 \\
14 & 7 & 8 & 6 & 75 & 89 & 70625252863 & 42 \\
14 & 8 & 9 & 5 & 39 & 53 & 4431613550 & 40 \\
\hline
\end{tabular}
\medskip
\caption{All binary triples not excluded by $K\le5t-2$.
Every row satisfies $\binom Nr<2^s$.}
\label{tab:finite-binary}
\end{table}

\FloatBarrier
\clearpage
\bibliographystyle{amsplain}
\bibliography{references}
\end{document}